\def\versy{conf}
\def\versy{full}
\def\fully{full}
\def\confy{conf}
\spnewtheorem{thm}{Theorem}{\bf }{\it }
\spnewtheorem{prop}[thm]{Proposition}{\bf }{\it }
\spnewtheorem{prob}[thm]{Open Problem}{\bf }{\it }
\spnewtheorem{cor}[thm]{Corollary}{\bf }{\it }
\spnewtheorem{lem}[thm]{Lemma}{\bf }{\it }
\spnewtheorem{defn}[thm]{Definition}{\bf }{\rm }
\spnewtheorem{rem}[thm]{Remark}{\bf }{\rm }
\spnewtheorem{exmp}[thm]{Example}{\bf }{\rm }
\spnewtheorem{clm}[thm]{Claim}{\bf }{\it }
\spnewtheorem{qust}[thm]{Question}{\bf }{\it }
\spnewtheorem{nota}[thm]{Notation}{\bf }{\rm }
\title{Ordered Semiautomatic Rings \\
  with Applications to Geometry\thanks{%
Frank~Stephan (PI) and Sanjay~Jain (Co-PI) are supported in part by
Singapore Ministry of Education Academic Research Funds Tier 2
MOE2016-T2-1-019 / R146-000-234-112 and MOE2019-T2-2-121 / R146-000-304-112.
Additionally, Sanjay~Jain was supported in part by
NUS grant C252-000-087-001. 
This project has also received funding from the European Union's Horizon 2020
research and innovation programme under the Marie Sk{\l}odowska-Curie
grant agreements No 794020 of Philipp Schlicht
(Project \emph{IMIC: Inner models and infinite computations}).
Ji Qi and Jacob Tarr worked on
this paper in the framework of the Undergraduate Research Opportunities
Programme at the NUS. The authors would like to thank Bakhadyr Khoussainov
and Sasha Rubin for correspondence.}}
\titlerunning{Ordered Semiautomatic Rings with Applications to Geometry}
\author{Ziyuan Gao\inst{1}, Sanjay Jain\inst{2},
Ji Qi \inst{1}, \newline Philipp Schlicht \inst{3},
Frank Stephan\inst{1}\inst{2}
and Jacob Tarr\inst{4}}
\authorrunning{Z.~Gao, S.~Jain, J.~Qi, P.~Schlicht, F.~Stephan and J.~Tarr}
\institute{Department of Mathematics, National University of Singapore\\
10 Lower Kent Ridge Road, S17, Singapore 119076, Republic of Singapore \\
fstephan@comp.nus.edu.sg
\and
Department of Computer Science, National University of Singapore\\
13 Computing Drive, COM1, Singapore 117417, Republic of Singapore \\
sanjay@comp.nus.edu.sg \and
School of Mathematics, University of Bristol \\
Fry Building, Woodland Road, Bristol, BS8 1UG, UK \\
philipp.schlicht@bristol.ac.uk \and
University of British Columbia\\ jacobdtarr@gmail.com}
\begin{document}

\maketitle

\begin{abstract}$\bf\!\!\!.\,$
The present work looks at semiautomatic rings with automatic addition
and comparisons which are dense subrings of the real numbers and asks
how these can be used to represent geometric objects such that certain
operations and transformations are automatic. The underlying ring has
always to be a countable dense subring of the real numbers and additions
and comparisons and multiplications with constants need to be automatic.
It is shown that the ring can be selected such that equilateral triangles
can be represented and rotations by $30^\circ$ are possible, while the
standard representation of the $b$-adic rationals does not allow this.
\end{abstract}

\section{Introduction}

\noindent
Hodgson \cite{Ho76,Ho83} as well as Khoussainov and Nerode \cite{KN95}
\ifx\versy\confy
and Blumensath and Gr\"adel \cite{BG00}
\fi
\ifx\versy\fully
and Blumensath and Gr\"adel \cite{Bl99,BG00}
\fi
initiated the study of automatic structures.
A structure $(A,\circ,\leq)$ of, say, an ordered semigroup
is then automatic iff there is an
isomorphic copy $(B,\circ,\leq)$ where $B$ is regular
and $\circ,\leq,=$ are automatic in the following sense:
A finite automaton reads all tuples of possible inputs and
outputs with the same speed in a synchronised way and accepts
these tuples which are valid tuples in the relations $\leq$ and $=$
or which are valid combinations $(x,y,z)$ with $x \circ y = z$ in
the case of the semigroup operation (function) $\circ$. For this,
one assumes that the inputs and outputs of relations and functions
are aligned with each other, like decimal numbers in addition,
and for this alignment -- which has to be the same for all operations --
one fills the gaps with a special character. So words are functions
with some domain $\{-m,-m+1,\ldots,n-1,n\}$ and some fixed range $\Sigma$
and the finite automaton reads, when processing a pair $(x,y)$ of inputs,
in each round the symbols $(x(k),y(k))$ where the special symbol $\# \notin
\Sigma$ replaces $x(k)$ or $y(k)$ in the case that these are not defined.
See Example~\ref{gridexample} below for an example of a finite automaton
checking whether $x+y=z$ for numbers $x,y,z$; here a finite automaton
computes a function by checking whether the output matches the inputs.
Automatic functions are characterised as those computed by
a position-faithful one-tape Turing machine in linear time \cite{CJSS12}.
\ifx\versy\fully
Position-faithful means that input and output start at the same position
on the Turing tape and the Turing machine overwrites the input by the
output. Furthermore, note that if a structure has several representatives
per element then a finite automaton must recognise the equality.
\fi

The reader should note, that after Hodgson's pioneering work \cite{Ho76,Ho83},
Epstein, Cannon, Holt, Levy, Paterson and Thurston \cite{ECHLPT92}
argued that in the above formalisation, 
automaticity is, at least from the viewpoint
of finitely generated groups, too restrictive. They furthermore wanted that
the representatives of the group elements are given as words over the
generators, leading to more meaningful representatives than arbitrary strings.
Their concept of automatic groups led, for finitely generated groups, to a
larger class of groups, though, by definition, of course 
it does not include groups which require infinitely many generators;
groups with infinitely many generators, to some extent, were covered
in the notion of automaticity by Hodgson, Khoussainov and Nerode. Nies,
Oliver and Thomas provide in several papers
\ifx\versy\confy
\cite{NT08,OT05}
\fi
\ifx\versy\fully
\cite{Ni07,NT08,OT05}
\fi
results which contrast and compare these two notions of automaticity and
give an overview on results for
groups which are automatic in the sense of Hodgson, Khoussainov and Nerode.
\ifx\versy\fully
Kharlampovich, Khoussainov and Miasnikov \cite{KKM11} generalised the notion
further to Cayley automatic groups. Here a finitely generated
group $(A,\circ)$ is Cayley
automatic iff the domain $A$ is a regular set, for every
group element there is a unique representative in $A$ and, for every
$a \in A$, the mapping $x \mapsto x \circ a$ is automatic.
\fi

Jain, Khoussainov, Stephan, Teng and Zou \cite{JKSTZ17} investigated the
general approach where, in a structure for some relations and functions,
it is only required that the versions of the functions or relations with
all but one variable fixed to constants is automatic. Here the
convention is to put the automatic domains, functions and relations before
a semicolon and the semiautomatic relations after the semicolon.
\ifx\versy\fully
For example, a semiautomatic group $(A,\circ;=)$ would be a structure where
the domain $A$ is regular, the group operation (with both inputs) is automatic
and for each fixed element $a \in A$ the set $\{b \in A: b = a\}$ is
regular --- note that group elements might have several representatives in
semiautomatic groups.
\fi
The present work will focus more on structures
like rings than groups, although the field of automatic and semiautomatic
structures has a strong group theoretic component.
The construction of these semiautomatic rings is similar to that of
Nies and Semukhin \cite{NS09} for a presentation of ${\mathbb Z}^2$ where
no $1$-dimensional subgroup is a regular subset.

The interested reader finds information about automatic structures
in the surveys of Khoussainov and Minnes \cite{KM10} and
Rubin \cite{Ru08}. Related but different links between automata theory
and geometry have been studied previously like, for example, the usage
of weighted automata and transducers to generate fractals \cite{CK97},
$\omega$-automata to represent geometric objects in the reals
\cite{BJW05,JSY07}
and the field of reals not being $\omega$-automatic
\ifx\versy\confy
\cite{ZGKP13}.
\fi
\ifx\versy\fully
\cite{ZGK12,ZGKP13}.
\fi
The last section of the present work applies the results and methods
of the current work to $\omega$-automatic structures.

The present work looks at semiautomatic rings which can be used
to represent selected points in the real plane. Addition and
subtraction and comparisons
\pagebreak[3]
as well as multiplication with constants
have to be automatic; however, the full multiplication is not automatic.
It depends on the structures which geometric objects and operations
with such object can be represented.

\begin{defn} \label{de:automatic}
The {\em convolution} of two words $v,w$ is a mapping from the
union of their domains to $(\Sigma \cup \{\#\}) \times (\Sigma \cup \{\#\})$
such that first one extends $v,w$ to $v',w'$, each having the domain
$dom(v) \cup dom(w)$, by assigning $\#$ whenever $v$ or $w$ are undefined
and then letting the convolution $u$ map every $h \in dom(v) \cup dom(w)$
to the new symbol $(v'(h),w'(h))$. Similarly one defines the convolutions
of three, four or more words.

A $h$-ary relation $R$ is {\em automatic}
\ifx\versy\fully
\cite{Bl99,BG00,Ho76,Ho83,KN95}
\fi
\ifx\versy\confy
\cite{BG00,Ho76,Ho83,KN95}
\fi
iff the set of all the convolutions of tuples 
$(x_1,\ldots,x_h) \in R$ is regular; a $h$-ary function $f$ is automatic
iff the set of all convolutions of $(x_1,\ldots,x_h,y)$ with
$f(x_1,\ldots,x_h) = y$ is regular. A $h$-ary relation $P$ is 
{\em semiautomatic} \cite{JKSTZ17}
iff for all indices $i \in \{1,\ldots,h\}$ and for all possible fixed values
$x_j$ with $j \neq i$ the resulting set $\{x_i: (x_1,\ldots,x_h) \in P\}$
is regular. A $h$-ary function $g$ is semiautomatic iff for all
indices $i \in \{1,\ldots,h\}$ and all possible values $x_j$ with $j \neq i$
the function $x_i \mapsto g(x_1,\ldots,x_h)$ is automatic.

A structure $(A,f_1,\ldots,f_k,R_1,\ldots,R_\ell;g_1,\ldots,g_i,
P_1,\ldots,P_j)$ 
is {\em semiautomatic} \cite{JKSTZ17}
iff (i) $A$ is a regular set of words where each word maps a finite subset
of $\mathbb Z$ to a fixed alphabet, (ii) each $f_h$ is
automatic, (iii) each $R_h$ is automatic, (iv) each $g_h$ is semiautomatic
and (v) each $P_h$ is semiautomatic. The semicolon separates the automatic
components of the structure from those which are only semiautomatic.
Structures without semiautomatic items are just called automatic.

An {\em automatic family}
\ifx\versy\fully
\cite{JOPS10}
\fi
$\{L_d: d \in E\}$ is a collection of sets such that
their index set $E$ and the set of all convolutions of $(d,x)$ with
$x \in L_d$ and $d \in E$ are regular.
\end{defn}

\begin{defn}
A {\em semiautomatic grid} or, in this paper, just grid,
is a semiautomatic ring $(A,+,=,<;\cdot)$ where the multiplication
is only semiautomatic and the addition and comparisons are automatic
such that $A$ forms a dense subring of the reals, that is, whenever $p,r$
are real numbers with $p<r$ then there is an $q \in A$ with
$p < q \wedge q < r$ and furthermore, all elements of $A$ represent
real numbers.
\end{defn}

\noindent
It makes sense to define density as a property of an ordered ring that
is embeddable into the reals, since the embedding is unique. A necessary
and sufficient criterion for the ring to be dense is that it has an
element strictly between $0$ and~$1$.
\ifx\versy\fully
If there is $q$ with $0<q<1$ and $p,r$ are any ring members with
$p<r$ then $p + q \cdot (r-p)$ is strictly between $p$ and $r$.
Furthermore, ordered rings which are subrings of the real numbers
do not have an endpoint, as adding $+1$ or $-1$ to any element
allows finding elements strictly above and strictly below the element.
\fi

\begin{exmp} \label{gridexample}
The ring $({\mathbb D}_b,+,=,<;\cdot)$ of the rational numbers in base $b$
with only finitely many nonzero digits is a grid. Here ${\mathbb D}_b =
\{n/b^m: n,m \in {\mathbb Z}\}$. Addition and comparison follow the
school algorithm as in the following example of Stephan \cite{St15}.
In ${\mathbb D}_{10}$, given three numbers $x,y,z$,
an automaton to check whether $x+y = z$ would process from the back
to the front and the states would be ``correct and carry to next digit (c)'',
``correct and no carry to next digit (n)'' and ``incorrect (i)''.
In the following three examples, $x$ stands on the top, $y$ in the
second and $z$ in the last row. The states of the automaton are for
starting from the end of the string to the beginning after having
processed the digits after them but not those before them. The filling
symbol $\#$ is identified with $0$. The decimal dot is not there
physically, it just indicates the position between digit
$a_0$ and digit $a_{-1}$.
The domain of each string is an interval from
a negative to a positive number plus an entry for the sign $-$ if needed.

\begin{verbatim}
   Correct Addition     Incorrect Addition  Incomplete Addition
    # 2 3 5 8. 2 2 5     3 3 3 3. 3 3 #      9 9 1 2 3. 4 5 6
    # 9 1 1 2. # # #     # # 2 2. 2 2 2      # # 9 8 7. 6 5 4
    1 1 4 7 0. 2 2 5     # 1 5 5. 5 5 2      0 0 1 1 1. 1 1 #
   n c n n c  n n n n   i i n n  n n n n    c c c c c  c c c n
\end{verbatim}

The difference
$x-y=z$ is checked by checking whether $x=y+z$ and then one can compare
the outcome of additions of possibly negative numbers by going to $-$
when the signs of the numbers require this. Furthermore, $x<y$ iff
$y-x$ is positive and $x=y$ if the two numbers are equal as strings.

For checking whether $x \cdot i/j = y$ for given rational constant $i/j$,
one just checks whether $i \cdot x = j \cdot y$ which, as $i,j$ are
constants, can be done by $i$ times adding $x$ to itself and $j$ times
adding $y$ to itself and then comparing the results. So $x \cdot 3/2 = y$
is equivalent to $x+x+x = y+y$ and the latter check is automatic.
Also the set of all $x \in {\mathbb D}_{10}$ so that $x$ is a multiple
of $3$ is regular, as it is first-order definable as
$
   \{x: \exists y \in {\mathbb D}_{10}\,[x = y+y+y]\}
$
and $1.2$ would be in this set and $1.01$ not. This works for all multiples
of fixed rational numbers in ${\mathbb D}_b$.
\end{exmp}

\section{Grids with Special Properties}

\noindent
Jain, Khoussainov, Stephan, Teng and Zou \cite{JKSTZ17} showed that
for every natural number $c$ which is not a square there is a grid
containing $\sqrt{c}$. Though these grids are dense subsets of the
real numbers, they do not have the property that one can divide by
any natural number, that is, for each $b \geq 2$ there is a ring element
$x$ such that $x/b$ is not in the ring. The reason is that most of the
rings considered by Jain, Khoussainov, Stephan, Teng and Zou are of
the form ${\mathbb Z} \oplus \sqrt{c} \cdot {\mathbb Z}$.
The following result will produce grids for which one can always divide
by some number $b \geq 2$, if this number is composite, it might allow
division by finitely many primes. Note that the number of primes cannot
be infinite by a result of Tsankov \cite{Ts11}.

\begin{thm} \label{th:main}
Assume that $b \in \{2,3,4,\ldots\}$ and $c$ is some root of an integer
and let $u>1$ be a real number chosen such that
the following four polynomials $p_1,p_2,p_3,p_4$ in a variable $x$
and constants $\ell,\hat c$ exist, where all polynomials have only
finitely many nonzero coefficients and all coefficients are integers:
\begin{enumerate}
\item $p_1(u) = \sum_{k \in \mathbb Z} b_k u^k = 1/b$;
\item $p_2(u) = \sum_{k \in \mathbb Z} c_k u^k = c$;
\item $p_3(u) = \sum_{k=0,-1,-2,\ldots,-h+1} d_k u^k = 0$ with $d_0=1$;
\item $p_4(u) = \sum_{k \in \mathbb Z} e_k u^k = 0$ with
      $e_\ell > \sum_{k \neq \ell} |e_k|$ and $| e_k |$ being
      the absolute value of $e_k$.
\end{enumerate}
Furthermore, the choice of the above has to be such that $\hat c > 3|e_\ell|$
and one can run for every polynomial
$p = \sum_{k=-m,\ldots,n} a_k x^k$ with every $a_k$ being an integer
satisfying $|a_k| \leq 3 |e_\ell|$ the following algorithm $C$ satisfying
the below termination condition:
\begin{quote}
  Let $k = n+h$. \\
  While $k > -m$ and $|a_{k'}| \leq \hat c$ for $k'=k,k-1,\ldots,k-h+1$ \\
  Do Begin $p = p-a_k \cdot p_3(u) \cdot u^k$
  and update the coefficients of the polynomial $p$ accordingly; Let $k=k-1$ End.
\end{quote}
The termination condition on $C$ is that whenever the algorithm terminates
at some $k > -m$ with some $|a_{k'}| > \hat c$ then
\begin{quote}
$|\sum_{k'=k,k-1,\ldots,k-h+1} u^{k'} a_{k'}| >
  u^{k-h}/(1-u^{-1}) \cdot 3|e_{\ell}|$.
\end{quote}
If all these assumptions are satisfied then one can use the representation
$$
S = \{\sum_{k=-m,\ldots,n} a_k u^k: m,n \in {\mathbb N}, \ 
a_k \in {\mathbb Z} \mbox{ and } |a_k| < |e_\ell|\}
$$
to represent every member of ${\mathbb D}_b[c]$ and
the ring $(S,+,<,=;\cdot)$ has automatic addition and comparisons and
semiautomatic multiplication. Furthermore, as $1/b$ is in the ring, it is
a dense subset of the reals, thus the ring forms a semiautomatic grid.
\end{thm}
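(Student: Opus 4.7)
The plan is to verify the five requirements for $(S,+,<,=;\cdot)$ to be a semiautomatic grid in turn, using the algorithm $C$ together with a $p_4$-based reduction as a common normalisation engine. Regularity of $S$ is immediate: its elements are strings over a fixed finite alphabet (integer coefficients of absolute value $<|e_\ell|$) with a distinguished position marking $u^0$. Density then follows from the remark after the grid definition, since $p_1(u)=1/b$ lies strictly between $0$ and $1$ and, by construction, belongs to $S$.

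To show coverage of $\mathbb{D}_b[c]$, note that $\mathbb{D}_b[c]=\mathbb{Z}[1/b,c]$ and, via $1/b=p_1(u)$ and $c=p_2(u)$, every element of $\mathbb{D}_b[c]$ is an integer Laurent polynomial in $u$. To bring such a polynomial into $S$ I would interleave $C$ (which applies $p_3(u)=0$ to eliminate high powers) with a $p_4$-reduction that, at any position $k$ with $|a_k|\geq|e_\ell|$, subtracts $\mathrm{sgn}(a_k)\cdot p_4(u)\cdot u^{k-\ell}$. The hypothesis $|e_\ell|>\sum_{k\neq\ell}|e_k|$ makes this reduction contractive on total coefficient mass, so the process terminates with $|a_k|<|e_\ell|$ everywhere.

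For automatic addition, equality, and comparison I would build a one-shot finite-state transducer reading the convolution of the inputs. Coefficient-wise subtraction of two or three elements of $S$ yields a polynomial with $|a_k|<3|e_\ell|$, which lies in the input range of $C$. I would then simulate $C$ followed by the $p_4$-normaliser using a sliding window whose width is bounded in terms of the supports of $p_3$ and $p_4$; all intermediate coefficients stay in a fixed bounded alphabet, so the transducer is finite-state. Equality $x=y$ reduces to checking that the fully normalised form of $y-x$ is zero. For comparison $x<y$, either $C$ completes and the subsequent bounded-coefficient form has an obvious sign, or $C$ terminates early; in the latter case the stated termination inequality guarantees that the weighted partial sum on the current window strictly dominates every possible tail $u^{k-h}/(1-u^{-1})\cdot 3|e_\ell|$ in absolute value, so the sign of $y-x$ equals the sign of that partial sum and is readable from the transducer's state. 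Semiautomaticity of multiplication is then routine: for fixed $a=\sum_{k\in I}a_k u^k\in S$, the map $x\mapsto a\cdot x=\sum_{k\in I}a_k(u^k x)$ is a fixed finite composition of shift-and-scale operations with the above normaliser.

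The main technical obstacle is proving that this combined normaliser is genuinely finite-state. One needs to check that the $p_4$-reductions applied after $C$ do not reintroduce coefficients large enough to unseat positions that $C$ has already cleared, that the interleaving of the two reductions converges after a bounded number of window-local steps, and that the slack $\hat c>3|e_\ell|$ is enough to absorb all cross-term perturbations. Closely tied to this is the subtle step of showing that the stated termination inequality for $C$ is sharp enough to pin down the sign in the early-termination case; this is the step that upgrades comparison from merely decidable to automatic, and without it the order relation on $S$ would not be regular.
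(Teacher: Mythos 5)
You have the right ingredients, but the role you assign to the $p_4$-reduction is off in a way that creates exactly the gap you flag at the end, and the paper's proof avoids that gap rather than filling it. In the paper, the $p_4$-reduction appears \emph{only} in a non-effective existence argument: to show that every integer Laurent polynomial in $u$ (hence every element of $\mathbb{D}_b[c]$, via $p_1,p_2$) has \emph{some} representative in $S$, one repeatedly adds or subtracts $u^{k-\ell}p_4$ at a too-large coefficient and observes that the total coefficient mass $\sum_k|a_k|$ strictly decreases, so the process terminates. This is a well-founded potential argument, not a left-to-right pass; it is never asked to be finite-state.

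For automaticity, the paper never normalises anything into $S$-form. The observation you are missing is that addition, equality and order all reduce to determining the \emph{sign} of a Laurent polynomial with coefficients bounded by $3|e_\ell|$: given $p,q,r\in S$ one forms $p+q-r$ coefficientwise and asks whether its value at $u$ is $<0$, $=0$, or $>0$; this simultaneously decides $p+q<r$, $p+q=r$, and $p+q>r$, and hence $<$, $=$, and the graph of $+$. Determining this sign is exactly what algorithm $C$, using $p_3$ alone, was designed for. Because $p_3$ has support $\{0,-1,\dots,-h+1\}$ with leading coefficient $1$, $C$ is a single high-to-low sweep maintaining a window of $h$ coefficients whose absolute values stay below $(1+\max_k|d_k|)\hat c$; that is manifestly a DFA on the convolution. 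If $C$ terminates early, the stated termination inequality says the current window's value dominates every possible tail, so the sign is the sign of that window; if $C$ runs to the end, only the last $h$ coefficients are nonzero and the sign (including zero) is a finite table lookup. No $p_4$-step is ever run inside the automaton, so the interaction you worry about between $C$-cleared positions and $p_4$-perturbations never arises.

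Concretely: your sentence ``Equality $x=y$ reduces to checking that the fully normalised form of $y-x$ is zero'' is the wrong reduction; it should be ``$x=y$ iff the sign of $y-x$ as determined by $C$ is zero,'' and likewise for $<$ and $+$. Once you make that change, the ``main technical obstacle'' you identify disappears entirely, since there is no combined $C$-plus-$p_4$ normaliser to analyse. The multiplication-by-constants part of your argument (shift by $u^{\pm1}$, negate, iterate addition, compose) is fine and matches the paper, with the same caveat that the automaton only verifies the pair $(x,y)$ against $a\cdot x=y$ rather than computing $y$.
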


\begin{proof}
When not giving $-m,n$ explicitly in the sum, sums like
$\sum_{k \in {\mathbb Z}} a_k u^k$ use the assumption
that almost all $a_k$ are $0$.
For ease of notation, let $S'$ be the set
$$
S' = \{\sum_{k \in {\mathbb Z}} a_k x^k:
       \mbox{ almost all $a_k$ are $0$ and all }
       a_k \in {\mathbb Z}\} \vspace{-0.1cm}
$$
so that $S \subseteq S'$.
On members $p,q \in S'$, one defines that $p \leq q$ iff $p(u) \leq q(u)$
when the polynomial is evaluated at the real number $u$. Furthermore,
$p = q$ iff $p \leq q$ and $q \leq p$. Addition and subtraction in $S'$
is defined using componentwise addition of coefficients.

Now one shows that for every $p \in S'$ there is a $q \in S$ with
$p = q$. For this one lets initially $h=0$ and $q_h = p$ and
whenever there is a coefficient $a_k$ of $q_h$ with $|a_k| \geq |e_\ell|$
then one either lets $q_{h+1} = q_h - x^{k-\ell} \cdot p_4$ (in the case
that $a_k > 0$) or lets $q_{h+1} = q_h + x^{k-\ell} \cdot p_4$ (in the
case that $a_k < 0$). Now let $||q_h||$ be the sum of the absolute values
of the coefficients; note that
$$
   ||q_{h+1}|| \leq ||q_h||-e_\ell+\sum_{k \neq \ell} |e_k| < ||q_h||
$$
and as there is no infinite strictly decreasing sequence of integers,
there is a $h$ where $q_h$ is defined but $q_{h+1}$ not,
as this update can no longer be made. Thus all coefficients of $q_h$
are between $-|e_\ell|$ and $+|e_\ell|$ and furthermore, as each
polynomial $p_4(u) \cdot u^{k-\ell}$ added or subtracted has the value $0$,
$q_h = p$. Now let $q = q_h$ and note that $q$ is a member of $S$ with
the same value at $u$ as $p$, so $p(u) = q(u)$.

Now let $p,q,r$ be members of $S$. In order to see what the sign
of $p+q-r$ is, that is, whether $p(u)+q(u)<r(u)$, $p(u)+q(u)=r(u)$
or $p(u)+q(u)>r(u)$, one adds the coefficients pointwise and
to check the expression $p+q-r$ at $u$, one then runs the algorithm $C$.
If $C$ terminates with some $|a_{k'}| > \hat c$, then the sign of
the current value of
\ifx\versy\fully
$$\sum_{k'=k,k-1,\ldots,k-h+1} u^{k'} a_{k'}$$
\fi
\ifx\versy\confy
$$\tilde a = \sum_{k'=k,k-1,\ldots,k-h+1} u^{k'} a_{k'}$$
\fi
gives the sign of $p+q-r$, as the not yet processed tail-sum
of $p+q-r$ is bounded by $u^{k-h}/(1-u^{-1}) \cdot 3|e_{\ell}|$.
In the case that $C$ terminates with all $|a_{k'}| \leq \hat c$ and
$k=-m$, then only the coefficients at $k'=k,k-1,\ldots,k-h+1$ are not
zero and again the sign of
\ifx\versy\fully
$$\sum_{k'=k,k-1,\ldots,k-h+1} u^{k'} a_{k'}$$
\fi
\ifx\versy\confy
$\tilde a$
\fi
is the sign of the original polynomial $p+q-r$.

Note that the algorithm $C$ can be carried out by a finite automaton,
as it only needs to memorise the current values of
$(a_k,a_{k-1},$ $\ldots,a_{k-h+1})$ which are $(0,0,\ldots,0)$ at the
start and which are updated in each step by reading $a_{k-h}$ for
$k=n+h,n+h-1,\ldots,-m$; the update is just subtracting
$a_{k'} = a_{k'}-a_k \cdot d_{k'-k}$ for $k'=k,k-1,\ldots,k-h+1$
and then updating $k=k-1$ which basically requires to read $a_{k-h}$ into the
window and shift the window by one character; note that the first
member, which goes out of the window, is $0$. Furthermore, during
the whole runtime of the algorithm, all values in the window have
at most the values $(1+\max\{|d_{k''}|: 0 \geq k'' \geq -h+1\}) \cdot
\hat c$ and thus there are only finitely many choices for
$(a_k,a_{k-1},\ldots,a_{k-h+1})$, and thus the determination
of the sign of $\sum_{k'=k,k-1,\ldots,k-h+1} u^{k'} a_{k'}$ 
can be done by looking up a finite table. Early
termination of the finite automaton can be handled by not changing
the state on reading new symbols, once it has gone to a state with some
$|a_{k'}| > \hat c$. Thus comparisons and addition are automatic;
note that for automatic functions, the automaton checks whether the
tuple $(inputs,output)$ is correct, it does not compute $output$
from $inputs$.

For the multiplication with constants, note that multiplication with
$u$ or $u^{-1}$ is just shifting the coefficients in the representation
by one position; multiplication with $-1$ can be carried out componentwise
on all coefficients; multiplication with integers is repeated addition
with itself. This also then applies to polynomials put together from
these ground operations, so $p \cdot (u^2-2+u^{-1})$ can be put together
as the sum of $p \cdot u \cdot u$, $-p$, $-p$, $p \cdot u^{-1}$.
All four terms of the sum can be computed by concatenated automatic
functions, thus there is an automatic function which also computes
the sum of these terms from a single input~$p$.~$\Box$
\end{proof}

\begin{exmp} \label{2R2+1/2}
There is a semiautomatic grid containing $\sqrt{2}$ and $1/2$.
\end{exmp}

\begin{proof}
For $c = \sqrt{2}$ and $b=2$, one chooses
\begin{enumerate}
\item $u^{-1} = 1-c/2$ (note that $u = 1/(1-\sqrt{1/2}) = 2/(2-\sqrt{2})>1$),
\item $p_1(u) = 2u^{-1}-u^{-2} = 1/2$,
\item $p_2(u) = 2-2u^{-1} = c$,
\item $p_3(u) = 1-4u^{-1}+2u^{-2} = 0$,
\item $p_4(u) = -u+4-2u^{-1} = 0$ with $\ell = 0$ and $e_{\ell}=4$,
\item $\hat c = 100$ (or any larger value).
\end{enumerate}
While all operations above come from straight-forward manipulations
of the choice of $u^{-1}$, one has to show the termination condition
of the algorithm.

For this one uses that $u \geq 3.41$ and
$1/(1-1/u) = \sum_{k \leq 0} u^k = \sqrt{2} \leq 1.4143$.
Assume that the algorithm satisfies before doing
the step for $k$ that all $|a_{k'}| \leq \hat c$ and does not
satisfy this after updating $a_k,a_{k-1},a_{k-2}$ respectively to $0$,
$a' = a_{k-1}+4a_k$ and $a'' = a_{k-2}-2a_k$; in the following,
$a_k,a_{k-1},a_{k-2}$ refer to the values before the update.
Without loss of generality assume
that $a_k > 0$, the case $a_k < 0$ is symmetric, the case $a_k=0$
does not make the coefficients go beyond $\hat c$.
If $a'' < -\hat c$ --- it can only go out of the range to the negative side
--- then $2a_k \geq \hat c-3 (e_\ell-1)$ and $p(u)$ is at least
$a_k \cdot (4-2/u) \cdot u^{k-1} - \hat c \cdot u^{k-1}/(1-1/u)
  \geq ((2-1/u) \cdot  (\hat c -9) - 1.4143 \hat c) u^{k-1}
  \geq (1.7 \cdot (\hat c \cdot 0.9) - 1.4143 \hat c) u^{k-1}
  \geq 0.1 \cdot \hat c \cdot u^{k-1}> 0$.
If $a'' \geq -\hat c$ and $a' > \hat c$ then
$p(u) \geq (\hat c \cdot u - 1/(1-1/u) \hat c) \cdot u^{k-2} \geq
\hat c \cdot u^{k-2} > 0$. So in both cases, one can conclude that
$p(u)$ is positive. Similarly, when $a_k < 0$ and the bound $\hat c$
becomes violated in the updating process then \mbox{$p(u) < 0$.}~$\Box$
\end{proof}

\begin{exmp} \label{2R3+1/2}
There is a grid which contains $\sqrt{3}$ and $1/2$ or, more generally,
any $c$ of the form $c = \sqrt{b^2-1}$ and $1/b$
for some fixed integer $b \geq 2$.
\end{exmp}

\begin{proof}
One chooses
\begin{enumerate}
\item $u^{-1} = 1-c/b$ (note that $u = b/(b-c) > 1$),
\item $p_1(u) = 2bu^{-1}-bu^{-2} = 1/b$,
\item $p_2(u) = b-bu^{-1} = c$,
\item $p_3(u) = 1-2b^2u^{-1}+b^2u^{-2} = 0$,
\item $p_4(u) = -u+2b^2-b^2u^{-1} = 0$ with $\ell = 0$ and $e_{\ell}=2 b^2$,
\item $\hat c = 1000 \cdot b^5$ (or any larger value).
\end{enumerate}
While all operations above come from straight-forward manipulations
of the equations, the termination condition of the algorithm needs
some additional work. Note that $u > b$, as $b-c < 1$.
Indeed, by $u \geq 1$ and $p_3(u)=0$ and $b \geq 2$, one has
$1-b^2u^{-1} \geq 0$ and $u \geq b^2$ and $\sum_{k \leq 0} u^k \leq 2$.
For the algorithm, one now notes that if after an update at $k$
where, without loss of generality, $a_k > 0$, it happens that
either (a) $a'' = a_{k-2}-b^2 a_k < -\hat c$ or (b) $a'' \geq -\hat c$ and
$a' = a_{k-1}+2b^2 a_k > \hat c$ then the following holds:
In the case (a), $a_k \geq 1000 b^3 - 6 b^2$ and the value of the sum
is at least
\begin{eqnarray*}
   (a_k \cdot (2b^2 u-b^2)-\hat c \cdot u - 12 b^4) \cdot u^{k-2} & > & \\
   ((2000 b^5 u -6b^4) - 1000 b^5\cdot u - 12 b^4) \cdot u^{k-2} & > & \\
   (1000 b^5 - 18 b^4) \cdot u^{k-1} & > & 0
\end{eqnarray*}
where the tail sum $12 b^4$ estimates that all digits $a_{k'}$ with
$k' \leq k-2$ are at least $-6b^2$ in the expression and the
$a_{k-1}$ is at least $-\hat c$ by assumption. In case (b),
one just uses that the first coefficient in the sum is greater than
$\hat c$ while all other coefficients are of absolute value
below $\hat c$, in particular as $\hat c \geq 6b^2$, so that, since $u \geq 2$,
$$
   \hat c \cdot u^{k-1} > \sum_{k' < k-1} \hat c u^{k'} \mbox{ and } 
   \hat c \cdot u^{k-1} > 2 \cdot \hat c \cdot u^{k-2}.
$$
Thus the algorithm terminates as required.~$\Box$
\end{proof}

\section{Applications to Geometry}

\noindent
One can use the grid to represent the coordinates of geometric
objects. For this, one uses in the field of automatic structures
the concept of convolution which uses the overlay of constantly
many words into one word. One introduces a new symbol, $\#$, which
is there to pad words onto the same length. Now, for example,
if in the grid of decimal numbers, one wants to describe a point
of coordinates $(1.112,22.2895)$, this would be done with the
convolution
$(\#,2) (1,2) . (1,2) (1,8) (2,9) (\#,5)$ where these six characters
are the overlay of two characters and the dot is virtual and
only marking the position where the numbers have to be aligned,
that means, the position between the symbols at location $0$ and
location $-1$. Instead of combining two numbers, one can also combine
five numbers or any other arity.

An automatic family is a family of sets $L_e$ with the indices
$e$ from some regular set $D$ such that the set $\{conv(e,x): x \in L_e\}$
is regular. Given a grid $G$, the set of all lines parallel to the
$x$-axis in $G \times G$ is an automatic family:
Now $D = G$ and $L_y = \{conv(x,y): x \in G\}$. The next example
shows that one cannot have an automatic family of all lines.

\begin{exmp}
The set of all lines (with arbitrary slope) is not an automatic
family, independent of the definition of the semiautomatic grid.
Given a grid $G$ and assuming that $\{L_e: e \in D\}$ is the
automatic family of all lines, one can first-order define the
multiplication using this automatic family:
\begin{quote}
  $x \cdot y = z$ if either at least one of $x,y$ and also $z$ are $0$
  or $x=1$ and $y=z$ or $y=1$ and $x=z$
  or all are nonzero and neither $x$ nor $y$ is $1$
  and there exists an $e \in D$ such that $conv(0,0),conv(1,y),conv(x,z)$
  are all three in $L_e$.
\end{quote}
As the grid $G$ has to be dense and is a ring with automatic addition
and comparison and as $G \subset \mathbb R$, the ring $G$ is an integral
domain and furthermore, $G$ has an automatic multiplication by the above
first-order definition. Khoussainov, Nies, Rubin and Stephan \cite{KNRS04}
showed that no integral domain is automatic, hence the collection of all
lines cannot be an automatic family, independent of the choice of the
grid.
\end{exmp}

\noindent
Similarly one can consider the family of all triangles.

\begin{thm}
Independently of the choice of the semiautomatic grid $G$,
the family of all triangles in the plane is not an automatic family.
However, every triangle with corner points in $G \times G$ is
a regular set.
\end{thm}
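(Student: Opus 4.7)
My plan is to adapt the strategy from the preceding lines example. I would argue by contradiction, assuming that the family of all triangles (with vertices in $G \times G$) is an automatic family $\{L_e : e \in D\}$, and then first-order define ring multiplication over the automatic part of $G$. Since $(G,+,<,=)$ is already automatic and $G$ is an integral domain as a subring of $\mathbb{R}$, this would contradict the theorem of Khoussainov, Nies, Rubin and Stephan \cite{KNRS04} that no integral domain is automatic.

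The encoding of multiplication I have in mind is dual to the one for lines: for $x,y \in G$ outside the edge cases $\{0,1\}$, the equality $z = xy$ holds iff $(0,0),(1,y),(x,z)$ are collinear, equivalently iff these three points are \emph{not} the three vertices of any triangle in the family. After handling the cases $x,y \in \{0,1\}$ directly (all of which are regular), the reduction therefore hinges on expressing ``$p$ is a vertex of $L_e$'' by a first-order formula in $(G,+,<,=)$ together with the automatic family. I would use the following midpoint criterion: $p$ is a vertex of $L_e$ iff $p \in L_e$ and there do not exist distinct $q,r \in L_e$ with $q+r = p+p$.

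I expect the main obstacle to lie in justifying this midpoint criterion, since it must correctly pick out exactly the three vertices for every triangle in the family. One direction is immediate from convexity: a vertex of the closed triangle $T$ is an extreme point of $T$, so it cannot be written as the midpoint of two distinct points of $L_e \subseteq T$. The converse requires density of $G$. Any non-vertex point $p \in L_e$ lies either in the relative interior of an edge $[A,B]$ with $A,B \in G \times G$, or in the topological interior of $T$. In the edge case I would choose a small $t \in G$ and set $q = p - t(B-A)$, $r = p + t(B-A)$; since $B - A \in G \times G$ and $G$ is a ring, both $q$ and $r$ belong to $G \times G$, and for $t$ sufficiently small they lie in $L_e$. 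The interior case is analogous, choosing $\delta \in G \times G$ with small independent coordinates. All of this uses only additions and comparisons, so the resulting relation for $z = xy$ is automatic, delivering the desired contradiction; if the family encodes only boundaries instead of filled regions, only the edge argument is needed.

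For the second statement, fix a triangle $T$ with vertices $A,B,C \in G \times G$. Viewed as a closed region, $T$ is the intersection of three half-planes, each of the form $\alpha x + \beta y \leq \gamma$ with $\alpha,\beta,\gamma \in G$ computed from two of the vertices. Multiplication by the fixed constants $\alpha$ and $\beta$ is automatic, and addition and comparison with the fixed $\gamma$ are automatic, so each half-plane restricted to $G \times G$ is regular; hence $T \cap (G \times G)$ is regular as an intersection of three regular sets. The boundary interpretation (three line segments) is handled analogously by replacing each inequality with an equation of the same form together with a comparison restricting one coordinate to the relevant edge.
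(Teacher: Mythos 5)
Your argument for the second part coincides with the paper's: both present the filled triangle with $G\times G$-corners as the intersection of three half-planes whose defining linear forms have fixed $G$-coefficients (the paper writes these as $(w-y)(x'-x)-(v-x)(y'-y)$), and both rely on multiplication by fixed grid elements being automatic; so that part is the same. For the first part, both you and the paper derive a contradiction by first-order defining ring multiplication from the hypothetical automatic family of triangles and then invoking the Khoussainov--Nies--Rubin--Stephan theorem that no infinite integral domain is automatic, but the encodings differ genuinely. The paper pins down a witness triangle $L_e$ by slice conditions: $(0,0)$ is the only point of $L_e$ with first coordinate $\le 0$, and the vertical slices at abscissae $1$ and $x$ are exactly $[0,y]$ and $[0,z]$, respectively. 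You instead express $z=xy$ (off the easy degenerate cases) as the \emph{non}-existence of an $L_e$ whose vertex set is $\{(0,0),(1,y),(x,z)\}$, i.e.\ as collinearity, and you supply a first-order definition of ``vertex'' via the midpoint (extreme-point) test. The price you pay is the auxiliary vertex-detection lemma, whose converse direction you correctly justify using density of $G$ together with closure of $G$ under multiplication, so that short symmetric chords through a non-vertex with endpoints in $G\times G$ stay inside the triangle; and you correctly note that the same extremality argument covers a boundary-only representation. What you buy in return is that the equivalence with $z=xy$ is manifest in both directions (collinearity literally is the identity $z=xy$), whereas the paper's slice conditions make the backward implication less transparent. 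Both reductions are first-order and both close the argument via the KNRS theorem, so your proposal is a valid alternative proof of the first part, arrived at by a different route.
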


\begin{proof}
For the first result, assume that $\{L_e: e \in D\}$ is a family
of all triangles -- when viewed as closed subsets of $G \times G$ --
which are represented in the grid and that this family contains at least
all triangles with corner points in $G \times G$. Now one can define for
$x,y,z > 0$ the multiplication-relation $z = x \cdot y$ using this family
as follows:
\begin{quote}
   $z = x \cdot y$ $\Leftrightarrow$ some $e\in D$ satisfies
   the following conditions: \\
   $\forall v,w \in G \mbox{ with } v \leq 0\,[(v,w) \in L_e
                     \Leftrightarrow (v,w) = (0,0)]$, \\
   $\forall w \in G\,[(1,w) \in L_e
                     \Leftrightarrow 0 \leq w \wedge w \leq y]$, \\
   $\forall w \in G\,[(x,w) \in L_e
                     \Leftrightarrow 0 \leq w \wedge w \leq z]$.
\end{quote}
This definition can be extended to a definition for the multiplication
on full $G$ with a straightforward case-distinction.
Again this cannot happen as then the grid would form an infinite
automatic integral domain which does not exist.

However, given a triangle with corner points $(x,y),(x',y'),(x'',y'')$,
note that one can find that linear functions from $G \times G$
into $G$ which are nonnegative iff the input point is on the right side
of the line through $(x,y)$ and $(x',y')$. So one would require that the
function
$$f(v,w) = (w-y) \cdot (x'-x) - (v-x) \cdot (y'-y) $$
is either always nonpositive or always nonnegative, depending on which
side of the line the triangle lies; by multiplying $f$ with $-1$, one can
enforce nonnegativeness. Note here that $x'-x$ and $y'-y$ are constants
and multiplying with constants is automatic, as the ring has a semiautomatic
multiplication. Thus a point is in the triangle or on its border iff
all three automatic functions associated with the three border-lines
of the triangle do not have negative values. This allows to show
that every triangle with corner points in $G \times G$ is regular.~$\Box$
\end{proof}

\medskip
\noindent
Note that this also implies that polygons with all corner points
being in $G \times G$ are regular subsets of the plane $G \times G$.

\begin{prop}
Moving a polygon by a distance $(v,w)$ can be done in any grid,
as it only requires adding $(v,w)$ to the coordinates of each points.
However, rotating by $30^\circ$ or $45^\circ$ is possible in some
but not all grids.
\end{prop}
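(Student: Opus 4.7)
The plan is to split the proposition into three assertions, each reducing to the same algebraic criterion: a coordinate transformation is realizable inside a grid $G$ precisely when its defining constants already lie in $G$. Represent a polygon throughout by its ordered list of corner points $(x_1,y_1),\ldots,(x_k,y_k) \in G \times G$, noting that the polygon itself is then a regular subset of $G \times G$ by the preceding theorem. For translation by $(v,w) \in G \times G$, each corner becomes $(x_i+v, y_i+w)$, which remains in $G \times G$ since $G$ is a ring. Because addition in $G$ is automatic with both variables free, the transformation acts as an automatic map on coordinates, and the translated polygon is again a polygon with corners in $G \times G$, hence regular.

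For rotations, I would use the formula $(x,y) \mapsto (x\cos\theta - y\sin\theta,\; x\sin\theta + y\cos\theta)$. This is realizable in a grid $G$ exactly when $\cos\theta, \sin\theta \in G$: if so, multiplication by each of these two fixed ring elements is handled by the semiautomatic multiplication, and addition/subtraction finish the job, so the new corners again lie in $G \times G$ and the transformation is automatic on coordinates. For $\theta = 30^\circ$ one needs $\sqrt{3}/2$ and $1/2$, both of which belong to the grid of Example~\ref{2R3+1/2}, since that grid contains $\sqrt{3}$ and $1/2$ and is closed under the ring operations. For $\theta = 45^\circ$ one needs $\sqrt{2}/2$, which lies in the grid of Example~\ref{2R2+1/2} as the ring product of $\sqrt{2}$ and $1/2$.

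For the negative direction, take the standard $b$-adic grid $G = {\mathbb D}_b$ from Example~\ref{gridexample}, every element of which is rational. Rotating the corner $(1,0)$ by $30^\circ$ would require the image $(\sqrt{3}/2, 1/2)$, and rotating it by $45^\circ$ would require $(\sqrt{2}/2, \sqrt{2}/2)$; in each case an image coordinate is irrational and therefore not in $G$. Thus no polygon in ${\mathbb D}_b \times {\mathbb D}_b$ having $(1,0)$ as a corner can be rotated within this grid. There is no genuine obstacle in the proof beyond the bookkeeping; the only subtlety is fixing the meaning of \emph{rotation is possible}, namely that the rotated polygon is again representable in $G \times G$ and that the coordinate transformation is built from the automatic and semiautomatic operations of the ring. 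Once this interpretation is in place, everything reduces to checking which trigonometric constants lie in the chosen grid.
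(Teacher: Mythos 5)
Your argument matches the paper's in substance: both reduce translation to the automaticity of addition, both use the standard rotation formula and observe that rotation requires multiplication by the fixed constants $\cos\theta$, $\sin\theta$, both invoke Examples~\ref{2R3+1/2} and~\ref{2R2+1/2} for the positive cases, and both use ${\mathbb D}_b$ (all of whose elements are rational) for the negative case; the paper additionally cites the grids ${\mathbb Z}\cup\sqrt{c}\cdot{\mathbb Z}$ as a second family failing for a different reason (no division by $2$), but your single counterexample already establishes ``not all grids.'' This is essentially the same proof.
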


\begin{proof}
Note that the formula for rotating around $30^\circ$, one needs to
map each point $(x,y)$ by the mapping
$(x,y) \mapsto (\cos(30^\circ) x - \sin(30^\circ)y,
\sin(30^\circ) x + \cos(30^\circ) y)$ and similarly for $45^\circ$
and $60^\circ$. For $30^\circ$, as $\sin(30^\circ) = 1/2$ and
$\cos(30^\circ) = \sqrt{3}/2$, one needs a grid which allows to divide
by $2$ and multiply by $\sqrt{3}$, an example is given by the grid
of Example~\ref{2R3+1/2}. For rotating by $60^\circ$, as it is twice
doing a rotation by $30^\circ$, the same requirements on the grid need
to be there. For rotating by $45^\circ$, the grid from Example~\ref{2R2+1/2}
can be used. However, these operations cannot be done with grids which
do not have the corresponding roots and also do not have the possibility
to divide by $2$. In particular, the grids ${\mathbb D}_b$ do not allow
to multiply by roots and those grids of the form
${\mathbb Z} \cup \sqrt{c} \cdot
{\mathbb Z}$ in prior work \cite{JKSTZ17} do not allow to divide by $2$.
Furthermore, the authors are not aware of any grid which has both,
$\sqrt{2}$ and $\sqrt{3}$ and thus allow to rotate by
both, $30^\circ$ and $45^\circ$.~$\Box$
\end{proof}

\begin{rem}
A grid allows to represent all equilateral triangles with side-length in $G$
iff $\sqrt{3}$ and $1/2$ are both in $G$.
\ifx\versy\fully
This is in particular true for grids which allow to rotate by $60^\circ$
and it is false for all grids of the type ${\mathbb D}_b$.

For a proof, assume that in a plane $G \times G$, an equilateral triangle is
represented with all three corner points in $G \times G$.
So let $(x,y),(x',y')$ be two corner points in $G \times G$. Now the
third corner point $(x'',y'')$ has either the coordinates
$$((x+x')/2+(y-y') \cdot \sqrt{3}/2,(y+y')/2+(x'-x)\cdot \sqrt{3}/2)$$
or the coordinates
$$((x+x')/2-(y-y') \cdot \sqrt{3}/2,(y+y')/2-(x'-x)\cdot \sqrt{3}/2)$$
and in either case, a scaling of either $x-x'$ or $y-y'$ by $\sqrt{3}/2$
has to be in $G$ and at least one of these is nonzero.

This is impossible in the case of $G = {\mathbb D}_b$, as all members
of $G$ are rational.

However, it can always be done in the grid from Example~\ref{2R3+1/2}
with $c = \sqrt{3}$ and $b=2$, as in that grid there are automatic
functions which divide the input by $2$ and which multiply the input
with $\sqrt{3}$, respectively. In this grid, for any base-line given
by two distinct points in $G \times G$, one can find the third point in order
to obtain the equilateral triangle with the three corner points.
\fi
\end{rem}

\begin{rem} \label{rem:standardrep}
Note that one represents a word $a_5 a_4 \ldots a_1
a_0.a_{-1} a_{-2}\ldots a_{-7}$ also by starting with $a_0$ and then
putting alternatingly the digits of even and odd indices giving
\ifx\versy\fully
$$a_0 a_1 a_{-1} a_2 a_{-2} \ldots a_5 a_{-5} 0 a_{-6} 0 a_{-7}$$
\else
$a_0 a_1 a_{-1} a_2 a_{-2} \ldots a_5 a_{-5} 0 a_{-6} 0 a_{-7}$
\fi
and one can show that in this representation,
the same semiautomaticity properties are valid as in the previously considered
representation. However, one gets one additional relation: One can recognise
whether two digits $a_{-m}$ and $a_n$ satisfy that $n=m+c$ for a given
integer constant $c$. This is used in the following example.
\end{rem}

\begin{exmp}
The family $\{E_d: d \in D\}$ of all axis-parallel rectangles is an
automatic family in all grids. Furthermore, let $d \equiv d'$ denote
that $E_d$ and $E_{d'}$ have the same area. In no grid, this relation
$\equiv$ is automatic, as otherwise one could reconstruct the multiplication.

For $p$ being a prime power,
in the grid $({\mathbb D}_p,+,=,<;\cdot)$ from Example~\ref{gridexample}, 
the relation $\equiv$ is semiautomatic
using the representation given in Remark~\ref{rem:standardrep}.
To see this, for a given area $\ell \cdot p^k$,
(i) one can disjunct over
all factorisations $\ell_1 \cdot \ell_2$ of $\ell$ which are pairs
of natural numbers not divisible by $p$, then (ii) check whether the length
of the sides of a given rectangle are of the form $\ell_1 \cdot p^i$
and $\ell_2 \cdot p^j$ with $i+j = k$, where $i,j$ are the positions of
the last
nonzero digits in the $p$-adic representation of the lengths and
(iii) check, by Remark~\ref{rem:standardrep}, whether $i+j$ are
the given constant $k$. Note that 
representations using prime powers can be translated
into representations based on the corresponding primes.

However, for grids such as $({\mathbb D}_6,+,=,<;\cdot)$, where $b$ is a
composite number other than a prime power, this method does not work.
This is mainly because one needs to use base $6$
for the comparison $<$
and then a finite automaton cannot see whether the two sides are, for
example, of lengths $2^k$ and $2^{-k}$ when recognising squares of area $1$.
Knowing that this method does not work, however, does not say that 
no other method works. It is an open problem whether one can
find a semiautomatic grid which allows to divide by $6$ and to represent
axis-parallel rectangles in a way such that 
checking whether two rectangles have same area is semiautomatic.
The same applies to the grids
of Examples~\ref{2R2+1/2} and~\ref{2R3+1/2}.
\end{exmp}

\section{Cube Roots}

\noindent
Jain, Khoussainov, Stephan, Teng and Zou \cite{JKSTZ17} did not find
any example of other roots than square roots to be represented in
a semiautomatic ordered ring. The following example represents a cube root.

\begin{exmp} \label{3R7}
There is a grid which contains $\sqrt[3]{7}$. Furthermore, there is
a grid which contains $\sqrt[3]{65}$ and $1/2$.

For the first, as one does not want to represent a proper rational,
$p_1$ is not needed. For this, one chooses
\begin{enumerate}
\item $u^{-1} = 2-\sqrt[3]{7}$,
\item $p_2(u) = 2-u^{-1} = \sqrt[3]{7}$,
\item $p_3(u) = 1-12u^{-1}+6u^{-2}-u^{-3} = 0$ and $p_4(u) = -p_3(u)$
      with $\ell = -1$,
\item Instead of a flat $\hat c$, one uses a bit different bound for the
      algorithm, namely $|a_k| \leq 16\hat c$, $|a_{k-1}| \leq 4 \hat c$
      and $|a_{k-2}| \leq \hat c$ for $\hat c = 360$.
\end{enumerate}
The equations for $p_3,p_4$ follow from $p_3(u) = (p_2(u))^3-7$.
Note that $11 < u < 12$ and $u^{-1}+u^{-2}+\ldots \leq 1/10$.
Furthermore, the coefficients in the normal form are between $-12$ and $+12$
and, when three numbers are added coefficientwise, between $-36$ and $+36$.
Let $p = \sum_k a_k \cdot u^k$ be such a sum of three numbers whose
sign has to be determined; all the coefficients have absolute values up
to $36$.

The main thing is that the algorithm can detect the sign of the number
whenever the first three coefficients overshoot for the first time.
Note that they start with $(0,0,0)$ and so one runs the updates
$a_{k-k'} = a_{k-k'}-a_k \cdot d_{-k'}$ simultaneously for $k'=1,2,3$
and then sets $a_k = 0$ and $k=k-1$
Here $d_{\cdot}$ are coefficients of $p_3$.
Assume that the update would make
the coefficients to overshoot for the first time and let $k,a_k,a_{k-1},
a_{k-2},a_{k-3}$ and $p = \sum_{k'} a_{k'}u^{k'}$
denote the values just before the update.

Without loss of generality, assume that $a_k > 0$ and it will be shown
that this implies that the polynomial $p$ would be positive.
Note that before the update,
for all $k'<k$, $|a_{k'}| \leq 4 \hat c$ and
$|\sum_{k'<k} a_{k'} u^{k'}| \leq 0.4 \cdot \hat c \cdot u^k$.

If $a_{k-3}$ grows above $\hat c$ at the update
then $a_k \geq \hat c \cdot 9/10$
and $p > (0.9 \cdot \hat c - \sum_{k' \leq -1} a_{k+k'} \cdot u^{k'}) \cdot u^k
       \geq (0.9 - 0.4) \cdot \hat c \cdot u^k > 0$.

If $a_{k-2}$ grows below $-4\hat c$ at the update
but $a_{k-3}$ stays inside the bound
then $a_k \geq \hat c \cdot 3 \cdot 1/6$
and $p > (0.5-0.4) \cdot \hat c \cdot u^k > 0$.

If $a_{k-1}$ grows beyond $16 \hat c$ at the update
but $a_{k-2}$ and $a_{k-3}$ stay inside their bounds
then $a_k \geq \hat c \cdot 12 \cdot 1/12$
and $p > (1-0.4) \cdot \hat c \cdot u^k > 0$.

\ifx\versy\confy
Thus there is a semiautomatic grid containing $\sqrt[3]{7}$.
One similarly proves the second item that there is
a semiautomatic grid containing $\sqrt[3]{65}$ and $1/2$.
\fi
\ifx\versy\fully
So the above test can detect the sign of $p$ by just looking at the sign
of $a_k$ in the event that the next step overshoots; if, through the
runtime, it never overshoots, then one can deduce the sign of $p$ from a
table look-up of the final values of the coefficients in the tracked
window. Thus there is a semiautomatic grid containing $\sqrt[3]{7}$.

Furthermore, for the grid which contains $\sqrt[3]{65}$ and $1/2$,
one chooses
\begin{enumerate}
\item $u^{-1} = (\sqrt[3]{65}-4)/2$, note that $96.49 \leq u \leq 96.50$,
\item $p_1(u) = 4((u^{-1}+2)^3-8) = 48u^{-1}+24u^{-2}+4u^{-3} = 1/2$,
\item $p_2(u) = 2u^{-1}+4 = \sqrt[3]{65}$,
\item $p_3(u) = 1-96u^{-1}-48u^{-2}-8u^{-3} = 0$ and $p_4(u) = -p_3(u)$
      with $\ell = -1$,
\item Instead of a flat $\hat c$, one uses a bit different bound for the
      algorithm, namely $|a_k| \leq 19\hat c$, $|a_{k-1}| \leq 7 \hat c$
      and $|a_{k-2}| \leq \hat c$ for $\hat c$ very large, for example
      $\hat c = 285000$.
\end{enumerate}
Note that $96.49 < u < 96.50$ and $u^{-1}+u^{-2}+\ldots \leq 1/95$.
Furthermore, the coefficients in the normal form are between $-95$ and $+95$
(both inclusive)
and, when three numbers are added coefficientwise, between $-285$ and $+285$
(both inclusive).
Let $p = \sum_k a_k \cdot u^k$ be such a sum of three numbers whose
sign has to be determined; all the coefficients have absolute values up
to $285$.

Again the algorithm can detect the sign of the number
whenever the first three coefficients overshoot for the first time.
Note that they start with $(0,0,0)$ and so one runs the updates
$a_{k-k'} = a_{k-k'}-a_k \cdot d_{-k'}$ simultaneously for $k'=1,2,3$
and then sets $a_k = 0$ and $k=k-1$. 
Here $d_{\cdot}$ are coefficients of $p_3$.
Assume that the update would make
the coefficients to overshoot for the first time and let $k,a_k,a_{k-1},
a_{k-2},a_{k-3}$ and $p = \sum_{k'} a_{k'}u^{k'}$
denote the values just before the update.

Without loss of generality, assume that $a_k > 0$ and it will be shown
that this implies that the polynomial $p$ would be positive.
Note that before the update,
for all $k'<k$, $|a_{k'}| \leq 7 \hat c$ and
$|\sum_{k'<k} a_{k'} u^{k'}| \leq 7/95 \cdot \hat c \cdot u^k \leq
0.074 \cdot \hat c \cdot u^k$.

If $a_{k-3}$ grows above $\hat c$ at the update
then $8a_k \geq \hat c \cdot 999/1000$ and $a_k \geq 0.1 \hat c$
and $p > (0.1 \cdot \hat c - \sum_{k' \leq -1} a_{k+k'} \cdot u^{k'}) \cdot u^k
       \geq (0.1 - 0.074) \cdot \hat c \cdot u^k > 0$.

If $a_{k-2}$ grows above $7\hat c$ at the update
then $48a_k \geq (7 \hat c - \hat c)$ and $a_k \geq \hat c/8 \geq 0.1 \hat c$
and $p > (0.1 - 0.074) \cdot \hat c \cdot u^k > 0$.

If $a_{k-1}$ grows beyond $19 \hat c$ at the update
then $96a_k \geq \hat c \cdot (19-7)$ and $a_k \geq \hat c/8 \geq 0.1 \hat c$
and $p > (0.1 - 0.074) \cdot \hat c \cdot u^k > 0$.

So the test works well and similarly the test also gives that $p<0$
in the case that there is an overshooting with negative $a_k$.
Thus there is a semiautomatic grid containing $\sqrt[3]{65}$ and $1/2$.
The method could be used to generate other examples of grids with
cube roots and fractions.
\fi
\end{exmp}

\section{Representing All Reals} \label{se:omega}

\noindent
\ifx\versy\fully
The authors assume for this section that the readers are
familiar with the theory of $\omega$-automata. Here $\omega$-words
are mappings $k \mapsto a_k$ from $\mathbb Z$ to the set of digits
such that, for some $k$, all $a_h$ with $h>k$ are $0$. An $\omega$-automaton
starts at an arbitrary $k$ such that all $a_h$ with $h \geq k$ are $0$
in a start state and then reads $a_k a_{k-1} a_{k-2} \ldots$ and
updates, on each $a_h$ read, its state in the same way as a finite automaton.
If the $\omega$-automaton goes infinitely often through an accepting state,
then it accepts the $\omega$-word, else it rejects the $\omega$-word.
In the case of determining the sign, there is a positive acceptance and
a negative acceptance; rejecting means that the $\omega$-word represents $0$.
Note that for an $\omega$-word to be accepted, the $\omega$-automaton needs
just to have some accepting run and not all runs need to be accepting.

\fi
Now a comment on the $\omega$-automatic approach \cite{BJW05,CK97,JSY07}.
The reals with addition and multiplication and infinite integral
domains in general are not $\omega$-automatic
\ifx\versy\fully
\cite{ZGK12,ZGKP13}.
\fi
\ifx\versy\confy
\cite{ZGKP13}.
\fi
It is also clear that $({\mathbb R},+;\cdot)$ is not
$\omega$-semiautomatic, as one otherwise would need uncountably many
different $\omega$-automata for recognising the uncountably many
functions $x \mapsto r \cdot x$ for constants $r \in \mathbb R$.
So the best what one can expect is that $({\mathbb R},+,<,=)$ is
$\omega$-automatic and that there are countably many functions
$f_r: x \mapsto r \cdot x$ which are $\omega$-automatic as well. These
functions certainly include, independent of the ring representation, all
$f_r$ with $r \in \mathbb Q$, as one only verifies the relation
$x \cdot r = y$ and for $r = i/j$ this is equivalent with verifying
$x \cdot i = y \cdot j$ with $i,j$ are integers and multiplication
with integers can be realised by repeated self-addition. The following
result shows that one can carry over ideas of Theorem~\ref{th:main}
to $\omega$-automatic structures in order to get that multiplication
with all constants from ${\mathbb Q}[\sqrt{b}]$ are $\omega$-automatic for
all natural numbers $b \geq 2$. As there is no need to implement the
division by $2$ explicitly, as that comes for free, one can use
the previously known representations \cite[Theorem~26]{JKSTZ17}.

\begin{thm} \label{th:omega}
There is an $\omega$-automatic representation of the reals where
addition, subtraction and comparisons are $\omega$-automatic and
furthermore the multiplication with any constant from
${\mathbb Q}[\sqrt{b}]$ is also an $\omega$-automatic unary function.
\end{thm}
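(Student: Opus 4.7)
The plan is to lift the construction of Theorem~\ref{th:main} to the $\omega$-automatic setting, as suggested by the reference to \cite[Theorem~26]{JKSTZ17}. I would fix a base $u > 1$ together with an integer-coefficient Laurent polynomial $p_2$ satisfying $p_2(u) = \sqrt{b}$ and a normal-form identity $p_3(u) = 0$ derived from $p_2(u)^2 = b$; the polynomial $p_1$ producing $1/b$ is no longer needed, because $\omega$-representations already carry arbitrary fractional tails, so division by $2$ (or any integer) comes for free. Reals are then represented by $\omega$-words $w : \mathbb{Z} \to \{-d,\dots,d\}$ (say $d = |e_{\ell}|-1$) with $w(k) = 0$ for all sufficiently large $k > 0$, interpreted as $\sum_k w(k) u^k$; every real admits such an expansion since $u > 1$ and the digit range is sufficiently wide.

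For addition, subtraction and comparison I would use $\omega$-automata that first form the coefficientwise difference $d = x + y - z$ (respectively $d = y - x$), yielding a stream of coefficients of absolute value at most $3|e_{\ell}|$, and then simulate algorithm $C$ of Theorem~\ref{th:main} on $d$ by maintaining the sliding window of length $h$ in the state. As in the finite case, as soon as some window entry exceeds $\hat{c}$, the sign of $d(u)$ is decided by the current window and the automaton enters a positive or a negative sink. The new ingredient needed in the $\omega$-setting is the following key lemma: if the window remains within $\hat{c}$ throughout the entire infinite run, then $d(u)=0$. Indeed, at the current position $K$ the value $d(u)$ equals the window contribution, of magnitude at most $\hat{c}\cdot u^{K}/(1-u^{-1})$, plus the untouched tail $\sum_{k \le K-h} a_k u^k$, of magnitude at most $|e_{\ell}| \cdot u^{K-h}/(1-u^{-1})$, and both bounds tend to $0$ as $K \to -\infty$. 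Equality is therefore accepted by the B\"uchi condition with all non-sink states declared accepting, while $x < y$ is accepted by eventually reaching the positive sink.

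Multiplication by a fixed constant $c \in \mathbb{Q}[\sqrt{b}]$ is handled exactly as in Theorem~\ref{th:main}: multiplication by $u^{\pm 1}$ is a one-position shift, multiplication by $-1$ is coefficientwise negation, multiplication by an integer is iterated self-addition composed through the automatic addition, multiplication by $\sqrt{b} = p_2(u)$ is the corresponding finite linear combination of shifts and integer multiplications, and multiplication by a rational $i/j$ is captured indirectly by the $\omega$-automatic check $j \cdot y = i \cdot x$. The main obstacle is the key lemma above: in the $\omega$-setting, distinct representations of the same real must be identified, and a priori a nonzero real could conceivably sustain a bounded-window run forever; the uniform exponential bound on both the window contribution and the unread tail --- essentially the same estimate that underpins the termination condition of algorithm $C$ --- is precisely what rules this out and makes the whole approach work.
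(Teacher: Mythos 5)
Your proposal follows the paper's strategy on every essential point: dropping $p_1$ because division by integers is free once one has $\omega$-words, running the bounded-window sign-detection automaton of Theorem~\ref{th:main} on the coefficientwise difference to realise addition, subtraction and comparison as a B\"uchi automaton, and implementing multiplication by a fixed element of $\mathbb{Q}[\sqrt b]$ via shifts, negation, repeated self-addition and the cross-multiplication test $j\cdot y = i\cdot x$. Your explicit ``key lemma'' --- a run in which the window never overshoots forces the value to be $0$, because both the window contribution $O(\hat c\, u^{K}/(1-u^{-1}))$ and the unread tail $O(u^{K-h}/(1-u^{-1}))$ vanish as $K\to-\infty$ --- is exactly what the paper uses implicitly when it says the automaton moves to a sink only ``at that moment that the sign of $p$ is known''; stating it outright is an improvement in clarity.

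The one genuine gap is the sentence ``every real admits such an expansion since $u>1$ and the digit range is sufficiently wide.'' This is not automatic, and it is in fact where the paper's proof spends most of its effort. The paper fixes the concrete representation $u=d+e\sqrt b$ with $u+u^{-1}=2d$ from \cite[Theorem~26]{JKSTZ17}, starts from an arbitrary bounded $\omega$-word and shows that the carry-propagation process converges to a normal form with $|a_k|\le 2d$ by introducing an auxiliary base $\tilde u>1$ with $2d-\tilde u-1/\tilde u>0$ and tracking the monotone decreasing functional $q_n(\tilde u)=\sum_k|a_k|\tilde u^k$; only then does it conclude that every real is representable. Your version needs some such argument (for instance, a greedy expansion works when the digit range $\{-d,\dots,d\}$ satisfies $2d+1\ge u$, and one should check that this holds for the chosen $u$ and $d=|e_\ell|-1$), and you should also say explicitly which $u$ you are using, since for the $\omega$-result the paper deliberately reverts to the simpler $\mathbb{Z}[\sqrt b]$ representation rather than lifting the full machinery of Theorem~\ref{th:main} with $p_1$ through $p_4$.
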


\ifx\versy\fully
\begin{proof}
One uses the representation of \cite[Theorem 26]{JKSTZ17}. For
$b \geq 2$ which is not a square, one can use that there are
natural numbers $d,e>3$ with $d^2 = be^2+1$. Now one chooses
$u = d+e \sqrt{b}$ and as shown by Jain, Khoussainov, Stephan, Teng
and Zou \cite[Theorem 26]{JKSTZ17}, $u+u^{-1} = 2d$. Given any
$\omega$-word $\sum_{k \in \mathbb Z} a_k u^k$ where all sufficiently
large $k$ satisfy $a_k = 0$, one can always choose the largest coefficient
with $|a_k| > 2d$ and let $s$ be the sign of $a_k$ and
update $a_{k+1} = a_{k+1}+s, a_k = a_k - 2ds, a_{k-1}=a_{k-1}+s$;
either this algorithm terminates or runs forever.

In the case that all coefficients are initially bounded by
a constant $\hat c$, the algorithm produces in the limit an $\omega$-word
of the same value $\sum_k a_k u^k$ such that all $k$ satisfy $|a_k| \leq 2d$:
For the verification, one takes a value $\tilde u > 1$ such that
$2d-\tilde u - 1/\tilde u > 0$. Starting with an $\omega$-word
$p_0(u)$, let $p_n(u)$ be the $\omega$-word after the $n$-th iteration
and let $q_n(\tilde u)$ be $\sum_{k \in \mathbb Z} |a_k| \tilde u^k$
where the coefficients are taken from
$p_n(u) = \sum_{k \in \mathbb Z} a_k u^k$.
Due to the absolute bounds on the coefficients and due to
$a_k = 0$ for all sufficiently large $k$, both $p_n(u)$ and
$q_n(\tilde u)$ converge absolutely.
Note that the value $p_n(u) = \sum_k a_k u^k$ is for all $n$ the
same, as every $u^{k+1} - 2du^k + u^{k-1}$ has the value $0$
and $p_{n+1}(u) = p_n(u) + s(u^{k+1} - 2du^k + u^{k-1})$
for some $k$ and $s$. Furthermore,
$$
   q_{n+1}(\tilde u) \leq q_n(\tilde u) - (2d-\tilde u-1/\tilde u) \cdot 
    {\tilde u}^k
$$
for the $k$ from the update of $p_n$. Furthermore, $q_n(\tilde u) \geq 0$
for all $n$. For that reason, the sequence of the $q_n(\tilde u)$ converges
from above to some number. Thus, for each $k$ there can only be
finitely many updates where some digit at or above $k$ changes.
Thus the pointwise limit $p_{\infty}$ of all $p_n$ exists and
this limit satisfies that all coefficients have at most the absolute
value $2d$. Furthermore, all $p_n(u)$ have the same value. Furthermore,
if one looks at the difference $\omega$-words $p_n-p_\infty$ then its
coefficients converge at $\tilde u$ absolute and therefore converge
also absolute when $u$ is taken, as $u^{-1} < \tilde u^{-1}$ and
for almost all $n$, only nonzero negative-indexed coefficients
occur. Thus the $p_n(u)$ converge to $p_\infty(u)$ and thus
$p_\infty(u)$ has the same value as $p_n(u)$ while all its
coefficients are between $-2d$ and $+2d$. Thus all real numbers
can be represented in the given system.

For verifying that addition and comparison are $\omega$-automatic, one shows
that given $p = r_1+r_2-r_3$ computes by coefficientwise addition and
subtraction of the three numbers $r_1,r_2,r_3$ given in the normal form,
there is a deterministic B\"uchi automaton which recognises, on the
$\omega$-word of all $a_k$, the sign of
$p(u) = \sum_{k \in \mathbb Z} a_k u^k$ where
all sufficiently large $k$ have $a_k = 0$. The digits are bounded
by $6d$. The algorithm is like algorithm $C$ in Theorem~\ref{th:main},
when $k$ is the largest nonzero coefficient, the algorithm starts with
the memory $(a_{k+2},a_{k+1}) = (0,0)$ and then in each round updates
on reading $a_k$ the memory from $(a_{k+2},a_{k+1})$ to
$(a_{k+1}+2da_{k+2},a_k-a_{k+2})$, where the first number has at most
the absolute value $306d^2$ and the second number has at most
the absolute value $106d$. These values can only overshoot when
$a_{k+2}$ before the update has at least the absolute value $100d$.
So assume, without loss of generality, that $a_{k+2} > 100d$,
and that it is the first time where an overshooting will happen
as a consequence of an upcoming update. As the overshooting has
not yet happened, all $a_h$ with $h < k+2$ satisfy $|a_h| \leq 106d$.
Note that
$$100d > 106d \cdot \sum_{k < 0} u^k \geq
  106d \cdot (1/d+1/d^2+\ldots),$$
as $d>3$ and $1/d+1/d^2+\ldots \leq 1/3$.
Thus the algorithm can always indicate that the number is positive
in the case that an overshooting happens at an update where the
first component of the old memory pair is positive and the algorithm can
indicate that the number is negative in the case that an overshooting
happens at an update where the first component of the old memory pair
is negative. Furthermore, as there are only finitely many possibilities
for the two components in the memory, the algorithm works overall with
finite memory and can be implemented as a B\"uchi automaton which
goes into the right accepting state (for positive or negative) at
that moment that the sign of $p$ is known and which then stays
in this state forever.

While this allows directly to implement addition, subtraction and
comparisons, one has still to verify that multiplications with
finite polynomials $q(u)$ work. For the latter, note that a shift
by a costant amount of positions, say from $\sum_k a_k u^k$ to
$\sum_k a_k u^{k+h}$ for some constant $h$, is an $\omega$-automatic
function. Thus multiplication with fixed positive or negative powers
of $u$ is $\omega$-automatic.
Furthermore, repeated addition is $\omega$-automatic. For multiplication
with rationals involving $u$ like $x \mapsto (3+u)/(8-u^3) \cdot x$ can
be mapped back to multiplication with polynomials: In the theory
of $\omega$-automatic functions, one does not compute
the $y = (3+u)/(8-u^3) \cdot x$ directly but one compares whether $y$ is the
result of that operation. For that reason, one can just compare whether
$y \cdot (8-u^3) = x \cdot (3+u)$ and this can be implemeted via
multiplication with fixed polynomials in~$u$.~$\Box$
\end{proof}

\medskip
\noindent
This also works with the field of $\mathbb Q$ extended by $\sqrt[3]{7}$
or $\sqrt[3]{65}$
using the formulas and algorithms given in Example~\ref{3R7}. So one
has the following corollary.

\begin{cor}
For any semiautomatic subring $(G,+,<,=;\cdot)$ of the reals
constructed by methods in the present work,
there is an $\omega$-automatic representation of the reals where
addition, subtraction and comparisons are $\omega$-automatic and
furthermore the multiplication with any constant from $G$
is also an $\omega$-automatic unary function.
\end{cor}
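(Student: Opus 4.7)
The plan is to lift Theorem~\ref{th:omega} from ${\mathbb Q}[\sqrt b]$ to an arbitrary grid $G$ constructed in the paper by using the two ingredients that drive each construction: a base $u>1$, a polynomial identity $p_4(u)=\sum_k e_k u^k =0$ whose coefficient $e_\ell$ strictly dominates the rest in absolute value, and the companion identity $p_3(u)=\sum_{k\le 0} d_k u^k = 0$ (with $d_0=1$) used by algorithm $C$ of Theorem~\ref{th:main}. The $\omega$-automatic representation will consist of all bi-infinite integer sequences $(a_k)_{k\in{\mathbb Z}}$ with $a_k = 0$ for all sufficiently large $k$ and $|a_k|\le |e_\ell|$, identified with the real number $\sum_k a_k u^k$ (absolutely convergent since $u>1$).

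First I would show that every real has such a representation. Starting from a greedy base-$u$ expansion with integer coefficients bounded by some a priori constant, I iteratively pick the largest $k$ with $|a_k| > |e_\ell|$, put $s=\mathrm{sgn}(a_k)$, and subtract $s\cdot u^{k-\ell}p_4(u)$; this leaves the value at $u$ unchanged because $p_4(u)=0$. For convergence, I choose a real $\tilde u$ slightly smaller than $u$ so that the strict dominance $|e_\ell|\tilde u^\ell > \sum_{j\neq\ell}|e_j|\tilde u^j$ still holds (possible by continuity). The weighted sum $q_n = \sum_k |a_k^{(n)}|\tilde u^k$ is finite, non-negative, and drops by a fixed positive multiple of $\tilde u^{k-\ell}$ per update; hence it converges, each coefficient changes only finitely often, and the pointwise limit is a valid bounded representation of the original real (the limit equals the original value at $u$ because $u^{-1}<\tilde u^{-1}$ gives absolute convergence at $u$ as well, exactly as in the proof of Theorem~\ref{th:omega}).

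Addition, subtraction and comparison are realised by a deterministic B\"uchi automaton mirroring algorithm $C$ of Theorem~\ref{th:main}: the coefficientwise sum $p=r_1+r_2-r_3$ has coefficients bounded by $3|e_\ell|$, and a window of length $h$ slides down from the highest nonzero index, updated by the $p_3$-based rule. The termination condition of Theorem~\ref{th:main} guarantees that the first time the window overshoots $\hat c$ the sign of the remaining tail is already decided by the leading coefficient in the window, so the automaton enters the appropriate positive- or negative-accepting sink; if no overshoot ever occurs the window values eventually stabilise and the verdict is read off a finite look-up table. Multiplication by any $g\in G$ is $\omega$-automatic because $g$ itself is a finite polynomial in $u$ with bounded integer coefficients, so $g\cdot x$ decomposes into finitely many $\omega$-automatic index shifts and integer scalings (repeated self-addition), whose sum is verified rather than computed by using the comparison machinery on the candidate $y$.

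The same scheme covers the cube-root grids of Example~\ref{3R7} after substituting the position-dependent bounds $|a_{k-j}|\le \hat c_j$ used there into both the normalisation step and the sign automaton. The main obstacle is precisely this convergence argument for the $\omega$-normalisation: the clean symmetric identity $u+u^{-1}=2d$ of Theorem~\ref{th:omega} no longer applies, so one must choose $\tilde u$ carefully in terms of the dominant coefficient $e_\ell$ and, in the cube-root case, replace the uniform weight $\tilde u^k$ by a position-weighted variant matching the asymmetric bounds; the remainder is a direct transcription of the arguments of Theorems~\ref{th:main} and~\ref{th:omega}.
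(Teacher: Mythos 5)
Your approach coincides with the paper's: the corollary is obtained by transferring Theorem~\ref{th:omega} to the general grids, using the $p_4$-normalisation with a weighted convergence argument to produce bounded $\omega$-representations of every real, the algorithm-$C$ sliding-window B\"uchi automaton for addition/subtraction/comparison, and the decomposition of a constant $g\in G$ as a finite integer polynomial in $u$ to handle multiplication -- the paper itself proves the corollary only by the one-line remark that the $\omega$-construction carries over using the formulas of Example~\ref{3R7}, so your write-up supplies essentially the intended argument in detail. One correction: the auxiliary weight $\tilde u$ must be chosen close to $1$, not ``slightly smaller than $u$''; at $\tilde u=u$ the identity $p_4(u)=0$ gives $e_\ell u^\ell \le \sum_{j\ne\ell}|e_j|u^j$, so the strict dominance $e_\ell\tilde u^\ell>\sum_{j\ne\ell}|e_j|\tilde u^j$ can fail near $u$, whereas at $\tilde u=1$ it holds strictly by the hypothesis $e_\ell>\sum_{j\ne\ell}|e_j|$ and continuity yields a valid $\tilde u\in(1,u)$ near $1$ (this mirrors Theorem~\ref{th:omega}, where $2d-\tilde u-1/\tilde u$ vanishes at $\tilde u=u$ but is positive for $\tilde u$ close to~$1$).
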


\section{Conclusion}

\noindent
Jain, Khoussainov, Stephan, Teng and Zou \cite{JKSTZ17} studied
semiautomatic structures and in particular semiautomatic ordered
rings where addition, subtraction, order and equality are in fact
automatic. In particular they showed that for all nonsquare positive
integers $c$ the ordered ring $({\mathbb Z}[\sqrt{c}],+,<,=;\cdot)$
is semiautomatic. The present work shows that for certain roots one
can also add in the fraction $1/2$ so that the rings
$({\mathbb D}_2[\sqrt{c}],+,<,=;\cdot)$ with $c=2,3$, respectively,
are semiautomatic.

The case $c=3$ has applications in
geometry, as one can take the domain $G$ of the ring as a grid
in order to represent geometric objects with corner points in $G \times G$.
This grid $G$ in particular allows to represent equilateral triangles
and also allows to rotate objects by $30^\circ$ in the representation.
For the case $c=2$, one obtains a grid where one can rotate by $45^\circ$.
It is unknown whether one can make a grid which allows to rotate by $15^\circ$,
the challange would be that one has to get $\sqrt{2}$ and $\sqrt{3}$, $1/2$
all three into the grid in a way that addition, comparison, equality and
multiplication with fixed elements from the grid are automatic.

Further results are investigations into which collections of geometric objects
form an automatic family; while the collection of all triangles with corner
points in $G \times G$ is for no grid $G$ an automatic family,
all triangles and polygones with corner points in $G \times G$
are always regular. Axis-parallel rectangles form an automatic family
and for some grids, the relation saying that two axis-parallel rectangles
have the same size is semiautomatic.

Finally the paper provides two semiautomatic rings which contain
$\sqrt[3]{7}$ and $\sqrt[3]{65}$, respectively, and these methods
can generate further such examples. Some subsequent discussions transfer
the results to $\omega$-automatic structures.
For $\omega$-automatic structures, as for every real $r$ and every integer
$b$ the number $r/b$ always exists, one does not need,
as in the case of countable grids, to make sure that one can divide
by $b$ explicitly; it comes for free whenever addition is $\omega$-auto\-matic,
that is, if one has an $\omega$-automatic model of the reals with addition
and order, then the multiplication with every fixed rational is automatic;
so the main goal is to extend this also to multiplication with some
irrationals and this can be done for all square-roots of
positive integers by transferring results from prior work and also
for $\sqrt[3]{7}$ and $\sqrt[3]{65}$ by applying methods in the present work.
Note that enabling the multiplication of different irrationals might require
different representations and therefore the current methods do not give
a single $\omega$-automatic structure where the multiplication with each
algebraic number is $\omega$-automatic.
\fi

\end{document}